\begin{document}

\title{A New Determinant Inequality of Positive Semi-Definite Matrices}

\author{Jun Fang, ~\IEEEmembership{Member,~IEEE}, and Hongbin Li,~\IEEEmembership{Senior
Member,~IEEE}
\thanks{Jun Fang is with the National Key Laboratory on Communications,
University of Electronic Science and Technology of China, Chengdu
610054, China, Emails: Jun.Fang@ieee.org}
\thanks{Hongbin Li is with the Department of Electrical and
Computer Engineering, Stevens Institute of Technology, Hoboken, NJ
07030, USA, E-mail: Hongbin.Li@stevens.edu}
\thanks{This work was supported in part by the National Science
Foundation under Grant ECCS--0901066, the National Science
Foundation of China under Grant 61172114, and the Program for New
Century Excellent Talents in University (China) under Grant
NCET-09-0261. }}

\maketitle

\begin{abstract}
A new determinant inequality of positive semi-definite matrices is
discovered and proved by us. This new inequality is useful for
attacking and solving a variety of optimization problems arising
from the design of wireless communication systems.
\end{abstract}

%$\boldsymbol{A}$ $\boldsymbol{B}$ $\boldsymbol{\mathit{\Gamma}}$

\section{A New Determinant Inequality}
The following notations are used throughout this article. The
notations $[\cdot]^{T}$ and $[\cdot]^{H}$ stand for transpose and
Hermitian transpose, respectively. $\text{tr}(\boldsymbol{A})$ and
$\text{det}(\boldsymbol{A})$ denote the trace and the determinant of
the matrix $\boldsymbol{A}$, respectively. The symbols
$\mathbb{R}^{n\times m}$ and $\mathbb{R}^{n}$ stand for the set of
$n\times m$ matrices and the set of $n$-dimensional column vectors
with real entries, respectively. $\mathbb{C}^{n\times m}$ and
$\mathbb{C}^{n}$ denote the set of $n\times m$ matrices and the
set of $n$-dimensional column vectors with complex entries,
respectively.

We introduce the following new determinant inequality.
\newtheorem{theorem}{Theorem}
\begin{theorem} \label{theorem1}
Suppose $\boldsymbol{A}\in\mathbb{C}^{N\times N}$ and
$\boldsymbol{B}\in\mathbb{C}^{N\times N}$ are positive semi-definite
matrices with eigenvalues $\{\lambda_{k}(\boldsymbol{A})\}$ and
$\{\lambda_{k}(\boldsymbol{B})\}$ arranged in descending order,
$\boldsymbol{D}\in\mathbb{R}^{N\times N}$ is a diagonal matrix with
non-negative diagonal elements $\{d_k\}$ arranged in descending
order. Then the following determinant inequality holds
\begin{align}
\text{det}\left(\boldsymbol{D}^H\boldsymbol{A}\boldsymbol{D}+\boldsymbol{B}\right)\leq\prod_{k=1}^N
(d_k^2\lambda_{k}(\boldsymbol{A})+\lambda_{N+1-k}(\boldsymbol{B}))
\label{theorem1:eq1}
\end{align}
The above inequality becomes an equality if $\boldsymbol{A}$ and
$\boldsymbol{B}$ are diagonal, and the diagonal elements of
$\boldsymbol{A}$ and $\boldsymbol{B}$ are sorted in descending order and
ascending order, respectively, i.e.
$\boldsymbol{A}=\text{diag}(\lambda_{1}(\boldsymbol{A}),\ldots,\lambda_{N}(\boldsymbol{A}))$,
and
$\boldsymbol{B}=\text{diag}(\lambda_{N}(\boldsymbol{B}),\ldots,\lambda_{1}(\boldsymbol{B}))$.
\end{theorem}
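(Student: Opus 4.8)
The plan is to separate the bound into two independent effects: the relative orientation of the eigenvectors of the two summands, and the replacement of the true eigenvalues of $\boldsymbol{D}^H\boldsymbol{A}\boldsymbol{D}$ by the products $d_k^2\lambda_k(\boldsymbol{A})$. Since $\boldsymbol{D}$ is real diagonal, $\boldsymbol{D}^H\boldsymbol{A}\boldsymbol{D}=\boldsymbol{D}\boldsymbol{A}\boldsymbol{D}$ is positive semi-definite; I write $\mu_k:=\lambda_k(\boldsymbol{D}\boldsymbol{A}\boldsymbol{D})$ and $b_k:=\lambda_k(\boldsymbol{B})$, both in descending order. The equality claim is immediate and can be dispatched first: if $\boldsymbol{A}$ and $\boldsymbol{B}$ are the stated diagonal matrices, then $\boldsymbol{D}\boldsymbol{A}\boldsymbol{D}+\boldsymbol{B}$ is diagonal with entries $d_k^2\lambda_k(\boldsymbol{A})+\lambda_{N+1-k}(\boldsymbol{B})$, whose determinant is exactly the right-hand side of \eqref{theorem1:eq1}.

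The first ingredient is the special case $\boldsymbol{D}=\boldsymbol{I}$: for positive semi-definite $\boldsymbol{P},\boldsymbol{Q}$ one has $\det(\boldsymbol{P}+\boldsymbol{Q})\le\prod_k(\lambda_k(\boldsymbol{P})+\lambda_{N+1-k}(\boldsymbol{Q}))$. I would diagonalize $\boldsymbol{P}=\boldsymbol{\Lambda}_P$ using unitary invariance and maximize $f(\boldsymbol{W})=\det(\boldsymbol{\Lambda}_P+\boldsymbol{W}\boldsymbol{\Lambda}_Q\boldsymbol{W}^H)$ over the compact unitary group, so the maximum is attained. Differentiating $\log\det$ along $\boldsymbol{W}\mapsto e^{t\boldsymbol{\Omega}}\boldsymbol{W}$ with $\boldsymbol{\Omega}$ skew-Hermitian, stationarity gives $\text{tr}[\boldsymbol{\Omega}\,[\boldsymbol{W}\boldsymbol{\Lambda}_Q\boldsymbol{W}^H,(\boldsymbol{\Lambda}_P+\boldsymbol{W}\boldsymbol{\Lambda}_Q\boldsymbol{W}^H)^{-1}]]=0$ for all $\boldsymbol{\Omega}$, which forces the commutator $[\boldsymbol{W}\boldsymbol{\Lambda}_Q\boldsymbol{W}^H,\boldsymbol{\Lambda}_P]=\boldsymbol{0}$; the two summands are then simultaneously diagonal and the determinant reduces to $\prod_k(p_k+q_{\sigma(k)})$ for a permutation $\sigma$. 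Since $\log(x+y)$ has negative mixed partial derivative $-1/(x+y)^2$, this product is maximized by the antitone pairing $\sigma(k)=N+1-k$, establishing the claim.

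Applying this with $\boldsymbol{P}=\boldsymbol{D}\boldsymbol{A}\boldsymbol{D}$ and $\boldsymbol{Q}=\boldsymbol{B}$ yields $\det(\boldsymbol{D}\boldsymbol{A}\boldsymbol{D}+\boldsymbol{B})\le\prod_k(\mu_k+b_{N+1-k})$, so the second ingredient is to show $\prod_k(\mu_k+b_{N+1-k})\le\prod_k(d_k^2\lambda_k(\boldsymbol{A})+b_{N+1-k})$. Here I use that $\boldsymbol{D}\boldsymbol{A}\boldsymbol{D}$ is similar to $\boldsymbol{A}\boldsymbol{D}^2$, so by the classical multiplicative (log-majorization) eigenvalue inequality $\prod_{k\le m}\mu_k=\prod_{k\le m}\lambda_k(\boldsymbol{A}\boldsymbol{D}^2)\le\prod_{k\le m}d_k^2\lambda_k(\boldsymbol{A})$ for every $m$, with equality at $m=N$. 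Setting $g_k:=d_k^2\lambda_k(\boldsymbol{A})$, this means $(\mu_k)$ is log-majorized by $(g_k)$, hence $(g_k)$ is reached from $(\mu_k)$ by finitely many multiplicative spreading transfers $(\mu_i,\mu_j)\mapsto(\mu_i t,\mu_j/t)$ with $t\ge1$ on index pairs $i<j$, keeping the vector sorted. Writing $\beta_k:=b_{N+1-k}$ (ascending), each transfer does not decrease the product, since $h(t)=(\mu_i t+\beta_i)(\mu_j/t+\beta_j)$ satisfies $h'(t)=\mu_i\beta_j-\mu_j\beta_i/t^2\ge0$ for $t\ge1$ using $\mu_i\ge\mu_j$ and $\beta_i\le\beta_j$. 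Chaining the transfers gives the desired inequality, and combining with the first ingredient proves the theorem.

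The main obstacle is the first ingredient, namely showing that the antitone eigenvector configuration maximizes $\det(\boldsymbol{P}+\boldsymbol{Q})$: the variational argument must carefully justify attainment of the maximum, that stationarity genuinely forces the commutator to vanish (with the degenerate-spectrum case handled by a continuity/perturbation argument), and that the reduction to a rearrangement is valid; alternatively one may invoke the known extremal description of the eigenvalues of a sum of Hermitian matrices. By contrast, the second ingredient is essentially mechanical once the log-majorization fact and the one-line monotonicity of $h(t)$ are in place.
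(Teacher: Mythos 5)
Your proposal is correct and follows essentially the same route as the paper's proof: first bound $\det(\boldsymbol{D}\boldsymbol{A}\boldsymbol{D}+\boldsymbol{B})$ by $\prod_{k}(\lambda_k(\boldsymbol{D}\boldsymbol{A}\boldsymbol{D})+\lambda_{N+1-k}(\boldsymbol{B}))$, then show $\{\lambda_k(\boldsymbol{D}\boldsymbol{A}\boldsymbol{D})\}$ is log-majorized by $\{d_k^2\lambda_k(\boldsymbol{A})\}$ via the classical singular-value product inequality, and finally show that the antitone-paired product is monotone under log-majorization. The differences are only in how the two ingredients are discharged: the paper simply cites the first inequality from Horn and Johnson, so your variational argument over the unitary group --- which you yourself flag as the main obstacle --- is work the paper avoids entirely; conversely, for the monotonicity step the paper proves its Lemma 1 by an induction whose pairwise transformation (its equation for $\beta_1,\beta_{l_1}$) is exactly the multiplicative spreading transfer you invoke, i.e., the paper constructs by hand the transfer decomposition you cite as a known consequence of majorization theory, and your derivative check on $h(t)$ plays the role of the paper's $N=2$ base case. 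One small point your version glosses over and the paper handles explicitly (via a footnote in its Lemma 1): when some entries vanish, the reduction of log-majorization to additive majorization of logarithms breaks down, so the existence of a sorted chain of transfers needs a separate, though easy, argument in that degenerate case.
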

\begin{proof}
See Appendix \ref{appA}.
\end{proof}

\section{Optimization Using The New Determinant Inequality}
The new determinant inequality can be used to solve the following
optimization problem
\begin{align}
\max_{\boldsymbol{C}}\quad & \det(\boldsymbol{G}+\boldsymbol{C}^H\boldsymbol{H}^H
\boldsymbol{R}_{v}^{-1}\boldsymbol{H}\boldsymbol{C})
\nonumber\\
\text{s.t.}\quad &
\text{tr}(\boldsymbol{C}\boldsymbol{R}_{x}\boldsymbol{C}^H)\leq P
\label{opt-3}
\end{align}
where $\boldsymbol{G}$, $\boldsymbol{R}_{v}$, and $\boldsymbol{R}_{x}$ are
positive definite matrices. Such an optimization arises when we
design the precoding matrix associated with a transmit node so as
to maximize the overall channel capacity (Details of the
formulation are omitted here).

To gain an insight into (\ref{opt-3}), we reformulate the problem
as follows. Let
$\boldsymbol{\bar{C}}\triangleq\boldsymbol{C}\boldsymbol{R}_{x}^{\frac{1}{2}}$,
and
$\boldsymbol{\bar{G}}\triangleq\boldsymbol{R}_{x}^{\frac{1}{2}}\boldsymbol{G}\boldsymbol{R}_{x}^{\frac{1}{2}}$,
the objective function (\ref{opt-3}) can be re-expressed as
\begin{align}
&\det(\boldsymbol{G}+\boldsymbol{C}^H\boldsymbol{H}^H
\boldsymbol{R}_{v}^{-1}\boldsymbol{H}\boldsymbol{C})\nonumber \\
=&\det(\boldsymbol{R}_{x}^{-1})\det(\boldsymbol{\bar{G}}+
\boldsymbol{\bar{C}}^H\boldsymbol{H}^H
\boldsymbol{R}_{v}^{-1}\boldsymbol{H}\boldsymbol{\bar{C}}) \label{equation-1}
\end{align}
To further simplify the problem, we carry out the SVD:
$\boldsymbol{\bar{C}}=\boldsymbol{U}_c\boldsymbol{D}_c\boldsymbol{V}_c^H$ and the
eigenvalue decomposition (EVD): $\boldsymbol{T}\triangleq\boldsymbol{H}^H
\boldsymbol{R}_{v}^{-1}\boldsymbol{H}=\boldsymbol{U}_t\boldsymbol{D}_t\boldsymbol{U}_t^H$,
and $\boldsymbol{\bar{G}}=\boldsymbol{U}_g\boldsymbol{D}_g\boldsymbol{U}_g^H$,
where $\boldsymbol{U}_c$, $\boldsymbol{V}_c$, $\boldsymbol{U}_t$, and
$\boldsymbol{U}_g$ are $p\times p$ unitary matrices, $\boldsymbol{D}_c$,
$\boldsymbol{D}_t$ and $\boldsymbol{D}_g$ are diagonal matrices given
respectively as
\begin{align}
\boldsymbol{D}_c\triangleq&\text{diag}(d_{c,1},d_{c,2},\ldots,d_{c,p})
\nonumber\\
\boldsymbol{D}_t\triangleq&\text{diag}(\lambda_1(\boldsymbol{T}),\lambda_2(\boldsymbol{T}),\ldots,\lambda_{p}(\boldsymbol{T}))
\nonumber\\
\boldsymbol{D}_g\triangleq&\text{diag}(\lambda_1(\boldsymbol{\bar{G}}),\lambda_2(\boldsymbol{\bar{G}}),\ldots,\lambda_{p}(\boldsymbol{\bar{G}}))
\label{definition-1}
\end{align}
in which $\lambda_k(\boldsymbol{T})$ and $\lambda_k(\boldsymbol{\bar{G}})$
denote the $k$-th eigenvalue associated with $\boldsymbol{T}$ and
$\boldsymbol{\bar{G}}$, respectively. Without loss of generality, we
assume that the diagonal elements of $\boldsymbol{D}_c$,
$\boldsymbol{D}_t$ and $\boldsymbol{D}_g$ are arranged in descending
order. We can rewrite (\ref{equation-1}) as
\begin{align}
&\det(\boldsymbol{R}_{x}^{-1})\det(\boldsymbol{\bar{G}}+
\boldsymbol{\bar{C}}^H\boldsymbol{H}^H
\boldsymbol{R}_{v}^{-1}\boldsymbol{H}\boldsymbol{\bar{C}}) \nonumber\\
=&\det(\boldsymbol{R}_{x}^{-1})\det(\boldsymbol{V}_c^H\boldsymbol{\bar{G}}\boldsymbol{V}_c+
\boldsymbol{D}_c^H\boldsymbol{U}_c^H\boldsymbol{T}\boldsymbol{U}_c\boldsymbol{D}_c)
\nonumber\\
\triangleq&\det(\boldsymbol{R}_{x}^{-1})\det(\boldsymbol{\bar{V}}_c^H\boldsymbol{D}_g\boldsymbol{\bar{V}}_c+
\boldsymbol{D}_c^H\boldsymbol{\bar{U}}_c^H\boldsymbol{D}_t\boldsymbol{\bar{U}}_c\boldsymbol{D}_c)
\label{equation-2}
\end{align}
where $\boldsymbol{\bar{V}}_c\triangleq\boldsymbol{U}_g^H\boldsymbol{V}_c$,
and $\boldsymbol{\bar{U}}_c\triangleq\boldsymbol{U}_t^H\boldsymbol{U}_c$.
Resorting to (\ref{equation-2}), the optimization (\ref{opt-3})
can be transformed into a new optimization that searches for an
optimal set
$\{\boldsymbol{\bar{U}}_c,\boldsymbol{D}_c,\boldsymbol{\bar{V}}_c\}$, in which
$\boldsymbol{\bar{U}}_c$ and $\boldsymbol{\bar{V}}_c$ are also unitary
matrices
\begin{align}
\max_{\{\boldsymbol{\bar{U}}_c,\boldsymbol{D}_c,\boldsymbol{\bar{V}}_c\}}\quad
& \det(\boldsymbol{\bar{V}}_c^H\boldsymbol{D}_g\boldsymbol{\bar{V}}_c+
\boldsymbol{D}_c^H\boldsymbol{\bar{U}}_c^H\boldsymbol{D}_t\boldsymbol{\bar{U}}_c\boldsymbol{D}_c)
\nonumber\\
\text{s.t.}\quad & \text{tr}(\boldsymbol{D}_c\boldsymbol{D}_c^H)\leq P_n
\nonumber\\
& \boldsymbol{\bar{U}}_c\boldsymbol{\bar{U}}_c^H=\boldsymbol{I}, \qquad
\boldsymbol{\bar{V}}_c\boldsymbol{\bar{V}}_c^H=\boldsymbol{I} \label{opt-7}
\end{align}
The optimization involves searching for multiple optimization
variables. Nevertheless, we can, firstly, find the optimal
$\{\boldsymbol{\bar{U}}_c,\boldsymbol{\bar{V}}_c\}$ given that
$\boldsymbol{D}_c$ is fixed. Then substituting the derived optimal
unitary matrices into (\ref{opt-7}), we determine the optimal
diagonal matrix $\boldsymbol{D}_c$. Optimizing
$\{\boldsymbol{\bar{U}}_c,\boldsymbol{\bar{V}}_c\}$ conditional on a given
$\boldsymbol{D}_c$ can be formulated as
\begin{align}
\max_{\{\boldsymbol{\bar{U}}_c,\boldsymbol{\bar{V}}_c\}}\quad &
\det(\boldsymbol{\bar{V}}_c^H\boldsymbol{D}_g\boldsymbol{\bar{V}}_c+
\boldsymbol{D}_c^H\boldsymbol{\bar{U}}_c^H\boldsymbol{D}_t\boldsymbol{\bar{U}}_c\boldsymbol{D}_c)
\nonumber\\
\text{s.t.}\quad &
\boldsymbol{\bar{U}}_c\boldsymbol{\bar{U}}_c^H=\boldsymbol{I}, \qquad
\boldsymbol{\bar{V}}_c\boldsymbol{\bar{V}}_c^H=\boldsymbol{I} \label{opt-8}
\end{align}

Letting
$\boldsymbol{A}\triangleq\boldsymbol{\bar{U}}_c^H\boldsymbol{D}_t\boldsymbol{\bar{U}}_c$,
$\boldsymbol{B}\triangleq\boldsymbol{\bar{V}}_c^H\boldsymbol{D}_g\boldsymbol{\bar{V}}_c$,
and utilizing Theorem \ref{theorem1}, the objective function
(\ref{opt-8}) is upper bounded by
\begin{align}
&\det(\boldsymbol{\bar{V}}_c^H\boldsymbol{D}_g\boldsymbol{\bar{V}}_c+
\boldsymbol{D}_c^H\boldsymbol{\bar{U}}_c^H\boldsymbol{D}_t\boldsymbol{\bar{U}}_c\boldsymbol{D}_c)\nonumber\\
&\leq\prod_{k=1}^{p}\left(d_{c,k}^2\lambda_k(\boldsymbol{A})+\lambda_{p+1-k}(\boldsymbol{B})\right)
\nonumber\\
&=\prod_{k=1}^{p}\left(d_{c,k}^2\lambda_k(\boldsymbol{T})+\lambda_{p+1-k}(\boldsymbol{\bar{G}})\right)
\end{align}
The above inequality becomes an equality when
$\boldsymbol{\bar{U}}_c=\boldsymbol{I}$ and
$\boldsymbol{\bar{V}}_c=\boldsymbol{J}$, where $\boldsymbol{J}$ is an
anti-identity matrix, that is, $\boldsymbol{J}$ has ones along the
anti-diagonal and zeros elsewhere. Therefore the optimal solution
to (\ref{opt-8}) is given by
\begin{align}
\boldsymbol{\bar{U}}_c=\boldsymbol{I}, \qquad
\boldsymbol{\bar{V}}_c=\boldsymbol{J}
\end{align}
Substituting the optimal
$\{\boldsymbol{\bar{U}}_c,\boldsymbol{\bar{V}}_c\}$ back into
(\ref{opt-7}), we arrive at the following optimization that
searches for optimal diagonal elements $\{d_{c,k}\}$
\begin{align}
\max_{\{d_{c,k}\}}\quad&
\prod_{k=1}^{p}\left(d_{c,k}^2\lambda_k(\boldsymbol{T})+\lambda_{p_n+1-k}(\boldsymbol{\bar{G}})\right)
\nonumber\\
\text{s.t.}\quad&\sum_{k=1}^{p}d_{c,k}^2\leq P, \quad d_{c,k}\geq
0 \quad\forall k \label{opt-9}
\end{align}
The above optimization (\ref{opt-9}) can be solved analytically by
resorting to the Lagrangian function and KKT conditions, whose
details are not elaborated here.

\useRomanappendicesfalse
\appendices

\section{Proof of Theorem \ref{theorem1}} \label{appA}
Define
$\boldsymbol{\mathit{\Gamma}}\triangleq\boldsymbol{D}^H\boldsymbol{A}\boldsymbol{D}$,
and its eigenvalues $\{\lambda_{k}(\boldsymbol{\mathit{\Gamma}})\}$ are
arranged in descending order. Then we have
\begin{align}
\text{det}\left(\boldsymbol{D}^H\boldsymbol{A}\boldsymbol{D}+\boldsymbol{B}\right)\leq
\prod_{k=1}^N(\lambda_{k}(\boldsymbol{\mathit{\Gamma}})+\lambda_{N+1-k}(\boldsymbol{B}))
\label{appA:eq1}
\end{align}
The above inequality comes from the following well-known matrix
inequality \cite{HornJohnson85}:
\begin{align}
\prod_{k=1}^N(\lambda_{k}(\boldsymbol{X})+\lambda_{k}(\boldsymbol{Y}))\leq&\det(\boldsymbol{X}+\boldsymbol{Y})
\nonumber\\
&\leq\prod_{k=1}^N(\lambda_{k}(\boldsymbol{X})+\lambda_{N+1-k}(\boldsymbol{Y}))
\end{align}
in which $\boldsymbol{X}$ and $\boldsymbol{Y}$ are positive semidefinite
Hermitian matrices, with eigenvalues $\{\lambda_{k}(\boldsymbol{X})\}$
and $\{\lambda_{k}(\boldsymbol{Y})\}$ arranged in descending order
respectively.

To prove (\ref{theorem1:eq1}), we only need to show that the term
on the right-hand side of (\ref{appA:eq1}) is upper bounded by
\begin{align}
\prod_{k=1}^N(\lambda_{k}(\boldsymbol{\mathit{\Gamma}})+\lambda_{N+1-k}(\boldsymbol{B}))\leq
\prod_{k=1}^N
(d_k^2\lambda_{k}(\boldsymbol{A})+\lambda_{N+1-k}(\boldsymbol{B}))
\label{appA:eq2}
\end{align}
Before proceeding to prove (\ref{appA:eq2}), we introduce the
following inequalities for the two sequences
$\{\lambda_{k}(\boldsymbol{\mathit{\Gamma}})\}_{k=1}^N$ and
$\{d_k^2\lambda_{k}(\boldsymbol{A})\}_{k=1}^N$.
\begin{align}
\prod_{k=1}^K\lambda_{k}(\boldsymbol{\mathit{\Gamma}})\leq&\prod_{k=1}^Kd_k^2\lambda_{k}(\boldsymbol{A}) \quad 1\leq K<N \nonumber\\
\prod_{k=1}^N\lambda_{k}(\boldsymbol{\mathit{\Gamma}})=&\prod_{k=1}^Nd_k^2\lambda_{k}(\boldsymbol{A})
\label{appA:eq3}
\end{align}
The proof of the inequalities (\ref{appA:eq3}) is provided in
Appendix \ref{appB}. The inequality relations between these two
sequences can be characterized by the notion of ``multiplicative
majorization'' (also termed log-majorization). Multiplicative
majorization is a notion parallel to the concept of additive
majorization. For two vectors $\boldsymbol{a}\in\mathbb{R}_{+}^N$ and
$\boldsymbol{b}\in\mathbb{R}_{+}^N$ with elements sorted in descending
order ($\mathbb{R}_{+}$ stands for the set of non-negative real
numbers), we say that $\boldsymbol{a}$ is multiplicatively majorized
by $\boldsymbol{b}$, denoted by $\boldsymbol{a}\prec_{\times}\boldsymbol{b}$,
if
\begin{align}
\prod_{k=1}^K a_k\leq& \prod_{k=1}^K b_k \quad 1\leq K<N
\nonumber\\
\prod_{k=1}^N a_k=&\prod_{k=1}^N b_k \label{appA:inequalities}
\end{align}
Here we use the symbol $\prec_{\times}$ to differentiate the
multiplicative majorization from the conventional additive
majorization $\prec$. Another important concept that is closely
related to majorization is schur-convex or schur-concave
functions. A function $f:\mathbb{R}^N\rightarrow\mathbb{R}$ is
said to be multiplicatively schur-convex if for
$\boldsymbol{a}\prec_{\times}\boldsymbol{b}$, then $f(\boldsymbol{a})\leq
f(\boldsymbol{b})$. Clearly, establishing (\ref{appA:eq2}) is
equivalent to showing the function
\begin{align}
f(\boldsymbol{a})\triangleq\prod_{k=1}^N(a_k+c_{N+1-k})
\end{align}
is multiplicatively schur-convex for elements
$\boldsymbol{c}=[c_k]\in\mathbb{R}^{N}_{+}$ arranged in descending
order. This multiplicatively schur-convex property can also be
summarized as follows.

\newtheorem{lemma}{Lemma}
\begin{lemma} \label{lemma1}
For vectors $\boldsymbol{a}\in\mathbb{R}_{+}^N$,
$\boldsymbol{b}\in\mathbb{R}_{+}^N$, and
$\boldsymbol{c}\in\mathbb{R}_{+}^N$, with their elements arranged in
descending order, if $\boldsymbol{a}\prec_{\times}\boldsymbol{b}$, then we
have $f(\boldsymbol{a})\leq f(\boldsymbol{b})$, i.e.
\begin{align}
\prod_{k=1}^N(a_k+c_{N+1-k})\leq \prod_{k=1}^N (b_k+c_{N+1-k})
\label{lemma1:eq1}
\end{align}
\end{lemma}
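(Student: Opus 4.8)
The plan is to prove Lemma~\ref{lemma1} by reducing the claim about multiplicative majorization to a sequence of elementary pairwise steps. The standard machinery for ordinary (additive) majorization says that if $\boldsymbol{a}\prec\boldsymbol{b}$, then $\boldsymbol{a}$ can be obtained from $\boldsymbol{b}$ by a finite sequence of \emph{Robin Hood} transfers (T-transforms), each of which moves mass from a larger coordinate to a smaller one while keeping the pair's sum fixed. The multiplicative analogue follows by taking logarithms: the conditions~\eqref{appA:inequalities} are exactly the statement that $\log\boldsymbol{a}\prec\log\boldsymbol{b}$ in the additive sense (partial sums of logs dominated, total sum of logs equal). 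So the first step is to pass to logarithms and invoke the classical result that one can interpolate from $\log\boldsymbol{b}$ down to $\log\boldsymbol{a}$ through finitely many additive T-transforms; exponentiating, this produces a chain of vectors $\boldsymbol{b}=\boldsymbol{x}^{(0)},\boldsymbol{x}^{(1)},\ldots,\boldsymbol{x}^{(m)}=\boldsymbol{a}$ in which each consecutive pair differs only in two coordinates $i<j$, with the product $x_i x_j$ held fixed and the gap between them narrowed.

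Given this reduction, it suffices to prove the inequality for a single such pairwise step, i.e.\ to show $f$ does not increase under one multiplicative T-transform. The second step therefore isolates two indices $i<j$ and the corresponding fixed constants $c_{N+1-i}$ and $c_{N+1-j}$; because $\boldsymbol{c}$ is sorted in descending order and $i<j$, we have $c_{N+1-i}\le c_{N+1-j}$. Writing the two affected factors as $(a_i+c_{N+1-i})(a_j+c_{N+1-j})$, I would fix the product $P\triangleq a_i a_j$ and treat the larger coordinate $a_i=t$ as the free variable with $a_j=P/t$, so that the product of the two factors becomes a one-variable function
\begin{align}
g(t)\triangleq\Bigl(t+c_{N+1-i}\Bigr)\Bigl(\tfrac{P}{t}+c_{N+1-j}\Bigr),\qquad t\ge\sqrt{P}.
\end{align}
The pairwise claim is then that $g$ is nondecreasing in $t$ on $t\ge\sqrt{P}$, since the T-transform shrinks $t$ toward $\sqrt{P}$ and we want the factor product to drop.

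The heart of the argument is this monotonicity check. Expanding, $g(t)=P+c_{N+1-i}c_{N+1-j}+c_{N+1-j}\,t+c_{N+1-i}\,P/t$, whose derivative is $g'(t)=c_{N+1-j}-c_{N+1-i}P/t^2$. For $t\ge\sqrt{P}$ we have $P/t^2\le 1$, hence $g'(t)\ge c_{N+1-j}-c_{N+1-i}\ge 0$ precisely because $c_{N+1-i}\le c_{N+1-j}$. This is where the descending order of $\boldsymbol{c}$ and the matched indexing $c_{N+1-k}$ in the definition of $f$ are essential: the smaller coordinate gets paired with the larger constant, which is exactly the alignment that makes each transfer nonincreasing. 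The \textbf{main obstacle} I anticipate is not this pairwise computation but justifying the interpolation cleanly in the multiplicative setting, in particular handling the boundary case where some $a_k$ or $b_k$ equals zero (so that $\log\boldsymbol{b}$ has $-\infty$ entries and the T-transform argument must be run directly on the products rather than on the logs). I would address this by noting that if any $b_k=0$ then by the equality of total products some $a_k=0$ as well, reducing the dimension, or alternatively by arguing the pairwise monotonicity directly at the level of products (as above) without ever invoking logarithms, which sidesteps the degeneracy entirely. Chaining the inequality $f(\boldsymbol{x}^{(\ell+1)})\le f(\boldsymbol{x}^{(\ell)})$ across all $m$ steps then yields $f(\boldsymbol{a})\le f(\boldsymbol{b})$, establishing~\eqref{lemma1:eq1}.
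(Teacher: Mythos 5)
Your overall route is viable and is packaged differently from the paper's: the paper gives a self-contained induction on the dimension, whose base case $N=2$ is the computation \eqref{appA:argument} and whose inductive step uses one hand-crafted multiplicative transfer \eqref{appA:eq5} to pin a single coordinate of $\boldsymbol{b}$ to the matching coordinate of $\boldsymbol{a}$ before recursing on truncated vectors; you instead outsource the combinatorial content to the classical Hardy--Littlewood--P\'olya fact that additive majorization (here, of $\log\boldsymbol{a}$ by $\log\boldsymbol{b}$) is realized by finitely many T-transforms, and your derivative check on $g(t)$ is exactly the paper's $N=2$ computation in disguise. That said, there are two genuine gaps in the write-up.

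First, alignment of the chain. The function $f$ is not permutation-invariant: position $k$ is welded to the constant $c_{N+1-k}$. Your monotonicity of $g$ uses both $c_{N+1-i}\le c_{N+1-j}$ and the fact that the \emph{larger} element of the transformed pair sits at the \emph{smaller} index $i$. If an intermediate vector of the chain failed to be sorted in descending order, so that a transformed pair had its larger element at the larger index, the relevant derivative becomes $c_{N+1-i}-c_{N+1-j}P/t^{2}$, which can be negative, and the single-step inequality can genuinely reverse. So your argument needs the additional (true, but unstated) fact that the T-transform chain can be chosen so that every intermediate vector remains sorted descending; the standard Marshall--Olkin construction has this property, but it is not part of the bare statement ``a finite sequence of Robin Hood transfers exists,'' and without it the chaining step is incomplete.

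Second, zeros. Your proposed dimension reduction is incorrect as stated: deleting the common zero coordinates preserves only the partial-product inequalities, not the equality of total products. Concretely, $\boldsymbol{a}=(1,0)$ and $\boldsymbol{b}=(2,0)$ satisfy $\boldsymbol{a}\prec_{\times}\boldsymbol{b}$, but the truncated vectors $(1)$ and $(2)$ have unequal products, so they do not satisfy \eqref{appA:inequalities} and the lemma cannot be invoked for them. Your fallback (``argue the pairwise monotonicity directly on products'') also misses the point, since the degeneracy never lay in the pairwise step: it lies in constructing the chain, which is precisely what the logarithmic reduction cannot do when entries vanish. A correct repair: if $\boldsymbol{a}$ has $q$ positive entries, then the partial-product inequalities force $b_1,\dots,b_q>0$; the trailing factors of $f(\boldsymbol{a})$ (those with $a_k=0$) are dominated termwise by the corresponding factors of $f(\boldsymbol{b})$; and on the positive block one has weak multiplicative submajorization, which can be handled by combining the exact-majorization case you proved with the standard fact that $\log\boldsymbol{a}\prec_{w}\log\boldsymbol{b}$ implies the existence of $\boldsymbol{u}$ with $\log\boldsymbol{a}\le\boldsymbol{u}\prec\log\boldsymbol{b}$, together with the coordinatewise monotonicity of $f$. (The paper disposes of the same issue in its footnote by noting that when no index $l$ satisfies $b_l<a_l$, one has $b_k\ge a_k$ for all $k$ and the claim is immediate.)
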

\begin{proof}
We prove Lemma \ref{lemma1} by induction. For $N=2$, we have
\begin{align}
f(\boldsymbol{a})-f(\boldsymbol{b})=&[a_1+c_2][a_2+c_1]
-[b_1+c_2][b_2+c_1]
\nonumber\\
\stackrel{(a)}{=}&[a_1-b_1]c_1+ [a_2-b_2]c_2
\nonumber\\
\stackrel{(b)}{\leq}&c_2[a_1+a_2 -b_1 -b_2] \stackrel{(c)}{\leq} 0
\label{appA:argument}
\end{align}
where $(a)$ can be easily derived by noting that $a_1a_2=b_1b_2$;
$(b)$ comes from the fact that $a_1-b_1\leq 0$ and $c_1\geq c_2$;
$(c)$ is a result of the following inequality: $a_1+a_2\leq
b_1+b_2$, that is, for any two non-negative elements, if their
product remains constant, then their sum increases as the two
elements are further apart.

Now suppose that for $M$-dimensional vectors $\boldsymbol{a}$,
$\boldsymbol{b}$ and $\boldsymbol{c}$, the inequality
(\ref{lemma1:eq1}) holds true. We show that (\ref{lemma1:eq1}) is
also valid for $(M+1)$-dimensional vectors $\boldsymbol{a}$,
$\boldsymbol{b}$ and $\boldsymbol{c}$. From the inequalities
(\ref{appA:inequalities}), we know that $b_1\geq a_1$. For the
special case where $b_1=a_1$, it is easy to verify that the
truncated vector $\boldsymbol{a}_t\triangleq
[a_2\phantom{0}\ldots\phantom{0}a_{M+1}]$ is multiplicatively
majorized by the truncated vector $\boldsymbol{b}_t\triangleq
[b_2\phantom{0}\ldots\phantom{0}b_{M+1}]$, i.e.
$\boldsymbol{a}_t\prec_{\times}\boldsymbol{b}_t$. Therefore we
have
\begin{align}
f(\boldsymbol{a}_t)\leq f(\boldsymbol{b}_t)
\end{align}
and consequently we arrive at $f(\boldsymbol{a})\leq f(\boldsymbol{b})$
given $b_1=a_1$.

Now consider the general case where $b_1>a_1$. There must be at
least one index such that $b_l<a_l$ since the overall products of
the two sequences $\{a_k\}_{k=1}^{M+1}$ and $\{b_k\}_{k=1}^{M+1}$
are identical\footnote{When $\boldsymbol{a}$ and $\boldsymbol{b}$
contain zero elements, the overall product is zero. In this case,
we may not find an index such that $b_l<a_l$. Nevertheless, since
we have $b_k\geq a_k$ for all $k$, proof of (\ref{lemma1:eq1}) is
evident.}. Without loss of generality, let $l_1$ denote the
smallest index for which $b_l<a_l$. We adopt a pairwise
transformation to convert the sequence $\{b_l\}_{k=1}^{M+1}$ into
a new sequence $\{\beta_k\}_{k=1}^{M+1}$. Specifically, the first
and the $l_1$th entries of $\{b_l\}_{k=1}^{M+1}$ are updated as
\begin{equation}
\begin{cases} \beta_1=a_1, \beta_{l_1}=
\frac{b_1b_{l_1}}{a_1} & \text{if } b_1b_{l_1}
\leq a_1a_{l_1}\\
\beta_1=\frac{b_1b_{l_1}}{a_{l_1}} , \beta_{l_1}=a_{l_1}& \text{if
} b_1b_{l_1}
>a_1a_{l_1}
\end{cases} \label{appA:eq5}
\end{equation}
whereas other entries remain unaltered, i.e. $\beta_k=b_k$,
$\forall k\neq 1,l_1$. Clearly, the entries $\beta_1$ and
$\beta_{l_1}$ satisfy
\begin{align}
\beta_1\beta_{l_1}=&b_1b_{l_1} \nonumber\\
\beta_1\leq &b_1 \label{appA:eq4}
\end{align}
That is,
$[\beta_1\phantom{0}\beta_{l_1}]^T\prec_{\times}[b_1\phantom{0}b_{l_1}]^T$.
By following the same argument of (\ref{appA:argument}) and noting
that $\beta_k=b_k$, $\forall k\neq 1,l_1$, we have
\begin{align}
f(\boldsymbol{\beta})\leq f(\boldsymbol{b}) \label{appA:eq6}
\end{align}
where
$\boldsymbol{\beta}\triangleq[\beta_1\phantom{0}\ldots\phantom{0}\beta_{M+1}]$.
Our objective now is to show
\begin{align}
f(\boldsymbol{a})\leq f(\boldsymbol{\beta}) \label{appA:eq8}
\end{align}
It can be easily verified that $\boldsymbol{a}$ is multiplicatively
majorized by $\boldsymbol{\beta}$, i.e.
$\boldsymbol{a}\prec_{\times}\boldsymbol{\beta}$, by noting
$\beta_l\geq a_l$ for any $l<l_1$ and
$\beta_1\beta_{l_1}=b_1b_{l_1}$.

Now we proceed to prove (\ref{appA:eq8}). Consider two different
cases in (\ref{appA:eq5}).
\begin{itemize}
\item If $b_1b_{l_1} \leq a_1a_{l_1}$, then $\beta_1=a_1$. In this
case, it is easy to verify that the truncated vector
$\boldsymbol{a}_t\triangleq [a_2\phantom{0}\ldots\phantom{0}a_{M+1}]$
is multiplicatively majorized by the truncated vector
$\boldsymbol{\beta}_t\triangleq
[\beta_2\phantom{0}\ldots\phantom{0}\beta_{M+1}]$, i.e.
$\boldsymbol{a}_t\prec_{\times}\boldsymbol{\beta}_t$. Therefore we
have
\begin{align}
f(\boldsymbol{a}_t)\leq f(\boldsymbol{\beta}_t)
\end{align}
and consequently $f(\boldsymbol{a})\leq f(\boldsymbol{\beta})$ as we
have $\beta_1=a_1$.
\item For the second case where $b_1b_{l_1}>a_1a_{l_1}$, we have
$\beta_{l_1}=a_{l_1}$. Define two new vectors
$\boldsymbol{a}_p\triangleq
[a_1\phantom{0}\ldots\phantom{0}a_{l_1-1}\phantom{0}a_{l_1+1}\phantom{0}\ldots,a_{M+1}]$
and $\boldsymbol{\beta}_p\triangleq
[\beta_1\phantom{0}\ldots\phantom{0}\beta_{l_1-1}\phantom{0}\beta_{l_1+1}\phantom{0}\ldots\phantom{0}\beta_{M+1}]$.
From $\boldsymbol{a}\prec_{\times}\boldsymbol{\beta}$, we can readily
verify that $\boldsymbol{a}_p$ is multiplicatively majorized by
$\boldsymbol{\beta}_p$, i.e.
$\boldsymbol{a}_p\prec_{\times}\boldsymbol{\beta}_p$. Therefore we
have
\begin{align}
f(\boldsymbol{a}_p)\leq f(\boldsymbol{\beta}_p)
\end{align}
and consequently $f(\boldsymbol{a})\leq f(\boldsymbol{\beta})$ as we
have $\beta_{l_1}=a_{l_1}$.
\end{itemize}

Combining (\ref{appA:eq6})--(\ref{appA:eq8}), we arrive at
(\ref{lemma1:eq1}). The proof is completed here.
\end{proof}

%In the following, we show that through a series of pairwise
%transformations, the sequence

\section{Proof of (\ref{appA:eq3})} \label{appB}
Recall the following theorem \cite[Chapter 9: Theorem
H.1]{MarshallOlkin79}

\emph{Theorem:} If $\boldsymbol{X}$ and $\boldsymbol{Y}$ are $N\times N$
complex matrices, then
\begin{align}
\prod_{k=1}^K\sigma_k(\boldsymbol{X}\boldsymbol{Y})\leq&\prod_{k=1}^K\sigma_k(\boldsymbol{X})\sigma_k(\boldsymbol{Y}),
\qquad K=1,\ldots,N-1 \nonumber\\
\prod_{k=1}^N\sigma_k(\boldsymbol{X}\boldsymbol{Y})=&\prod_{k=1}^N\sigma_k(\boldsymbol{X})\sigma_k(\boldsymbol{Y})
\end{align}
where $\{\sigma_i(\cdot)\}$ are singular values arranged in a
descending order.

By utilizing the above results, we have
\begin{align}
\prod_{k=1}^K\lambda_{k}(\boldsymbol{\mathit{\Gamma}})=&\prod_{k=1}^K\sigma_{k}(\boldsymbol{\mathit{\Gamma}})
\leq\prod_{k=1}^K\sigma_{k}(\boldsymbol{D}^H\boldsymbol{A})\sigma_{k}(\boldsymbol{D})
\nonumber\\
=&\bigg(\prod_{k=1}^K\sigma_{k}(\boldsymbol{D}^H\boldsymbol{A})\bigg)\bigg(\prod_{k=1}^K
d_k\bigg) \nonumber\\ \leq&
\bigg(\prod_{k=1}^K\sigma_{k}(\boldsymbol{D}^H)\sigma_{k}(\boldsymbol{A})\bigg)
\bigg(\prod_{k=1}^K d_{k}\bigg) \nonumber\\
=&\prod_{k=1}^K d_k^2\lambda_{k}(\boldsymbol{A}), \qquad
K=1,\ldots,N-1
\end{align}

\end{document}